%[a4paper,11pt]
\documentclass[11pt]{article}

\usepackage{lineno}
\usepackage{amsmath}
\usepackage{amssymb}
\usepackage{amsthm}
\usepackage{graphicx}
\usepackage{enumerate}
\usepackage{cite}
\usepackage{fullpage}
\usepackage[usenames]{color}
\usepackage[nodayofweek]{datetime}

\usepackage{algpseudocode}

\algrenewcommand\algorithmicprocedure{\textbf{Algorithm}}

\newcommand{\Real}{\ensuremath{\mathbb{R}}}
\newcommand{\Plane}{\ensuremath{\mathbb{R}^2}}

\newcommand{\Poly}{\ensuremath{\mathcal{P}}}
\newcommand{\SPM}{\ensuremath{\mathsf{SPM}}}
\newcommand{\SPT}{\ensuremath{\mathsf{SPT}}}
\newcommand{\Vis}{\ensuremath{\mathsf{VR}}}

\newcommand{\plf}{\ensuremath{\mathrm{len}}}

\newcommand{\bd}{\ensuremath{\partial}}

\newcommand{\seg}{\overline}
\newcommand{\cen}{\ensuremath{\mathrm{cen}}}

\newcommand{\prad}{\Phi}
\newcommand{\rad}{\mathrm{rad}}

\newcommand{\dist}{\mathrm{d}}

\newcommand{\DecompSPM}{\mathcal{A}_\mathsf{SPM}}
\newcommand{\DecompSPT}{\mathcal{A}_\mathsf{SPT}}

%\newtheoremstyle{mytheorem}{3pt}{3pt}{\slshape}{}{\bfseries}{}{.5em}{}
%\theoremstyle{mytheorem}
\newtheorem{lemma}{Lemma}
\newtheorem{theorem}{Theorem}

\makeatletter
\newbox\ProofSym
\setbox\ProofSym=\hbox{%
\unitlength=0.18ex%
\begin{picture}(10,10)
\put(0,0){\framebox(9,9){}} \put(0,3){\framebox(6,6){}}
\end{picture}}
\renewenvironment{proof}[1][Proof.]{\O@proof{#1}}{\O@endproof}
\def\O@proof#1{\trivlist
   \@topsep\z@\@topsepadd\smallskipamount%
   \@ifstar{\item[]}{\item[\hskip\labelsep\it #1 ]}}
\def\O@endproof{\hfill\copy\ProofSym\linebreak[3mm]\endtrivlist}
\makeatother

\title{Computing the Geodesic Centers of a Polygonal Domain%
\thanks{%
A preliminary version of this paper was presented at
the \emph{26th Canadian Conference on Computational Geometry (CCCG'14)}~\cite{bko-cgcpd-14}.
Work by S.W. Bae was supported by Basic Science Research Program through the National Research Foundation
of Korea (NRF) funded by the Ministry of Science, ICT \& Future Planning (2013R1A1A1A05006927).
Work by M. Korman was partially supported by the ELC project (MEXT KAKENHI No. 24106008).
Work by Y. Okamoto was partially supported by Grant-in-Aid for Scientific Research from Ministry of Education,
Science and Culture, Japan and Japan Society for the Promotion of Science,
and the ELC project (Grant-in-Aid for Scientific Research on Innovative Areas, MEXT Japan).
}
}

\author{%
Sang Won Bae\thanks{%
Department of Computer Science, Kyonggi University, Suwon, Korea.
Email: \texttt{swbae@kgu.ac.kr}
}
\and %
Matias Korman\thanks{%
Graduate School of Information and Science, Tohoku University, Sendai, Japan. Email: \texttt{mati@dais.is.tohoku.ac.jp}
}
\and %
Yoshio Okamoto\thanks{%
Department of Communication Engineering and Informatics,
University of Electro-Communications, Chofu, Tokyo, Japan.
Email: \texttt{okamotoy@uec.ac.jp}} }

%\date{%
%\today\quad\currenttime
%}

\begin{document}

\maketitle

\begin{abstract}
We present an algorithm that computes the geodesic center of a given polygonal domain.
The running time of our algorithm is $O(n^{12+\epsilon})$ for any $\epsilon>0$,
where $n$ is the number of corners of the input polygonal domain.
Prior to our work, only the very special case where a simple polygon is given as input
has been intensively studied in the 1980s,
and an $O(n \log n)$-time algorithm is known by Pollack et al.
Our algorithm is the first one that can handle general polygonal domains
having one or more polygonal holes.
\end{abstract}

%\linenumbers
%\newpage
%\setcounter{page}{1}
%%%%%%%%%%%%%%%%%%%%%%%%%%%%%%%%%%%%%%%%%%%%%%%%%%%%%%%%%%%%%%%%%%%%%%%%%%%
\section{Introduction} \label{sec:intro}
%%%%%%%%%%%%%%%%%%%%%%%%%%%%%%%%%%%%%%%%%%%%%%%%%%%%%%%%%%%%%%%%%%%%%%%%%%%
The \emph{diameter} and \emph{radius} of  a compact shape
are among the most natural and fundamental parameters describing and summarizing
the shape itself.
In this paper, we study these quantities for a \emph{polygonal domain} $\Poly$,
that is, a polygon having $h \geq 0$ holes.
More specifically, a polygonal domain is a connected and compact subset of $\Plane$
whose boundary consists of $h+1$ simple closed polygonal curves.
%The holes and the outside of $\Poly$ are regarded as \emph{obstacles} so that any feasible path in $\Poly$ is not allowed to cross the boundary $\bd \Poly$.
In regard to a metric $d$ on $\Poly$, the diameter of $\Poly$ is defined
to be the maximum distance over all pairs of points in the $\Poly$,
that is, $\max_{p,q \in \Poly} d(p, q)$,
while the radius is defined to be the min-max value $\min_{p\in \Poly} \max_{q\in \Poly} d(p, q)$.
%Here, the polygon $P$ is considered as a closed and bounded space
%and thus the diameter and radius of $P$ with respect to $d$ are well defined.
A pair of points in $\Poly$ realizing the diameter is called
a \emph{diametral pair}, and a \emph{center} is defined to be a point $c\in \Poly$
such that $\max_{q\in \Poly} d(c, q)$ is equal to the radius.
Among common metrics on a polygonal domain $\Poly$,
we consider the \emph{geodesic distance} $\dist(p, q)$ for $p,q\in \Poly$
that measures the Euclidean length of a shortest path
that connects $p$ and $q$ and stays inside $\Poly$.
The diameter and the radius of a polygonal domain $\Poly$ with respect to the geodesic distance $\dist$
are often called  \emph{geodesic diameter} and \emph{geodesic radius} of $\Poly$, respectively.

The problem of computing the geodesic diameter and radius of a simple polygon
(i.e., a polygonal domain with no holes)
has been intensively studied in computational geometry since the early 80s.
For the geodesic diameter problem, Chazelle~\cite{c-tpca-82} gave the first algorithm
whose running time was $O(n^2)$, where $n$ denotes the number of vertices or corners\footnote{%
A corner of a polygon usually indicates a vertex to which two incident edges
form an angle that is not $180^\circ$.}
of the input polygon.
This was afterwards improved to $O(n\log n)$ time by Suri~\cite{s-cgfnsp-89},
and finally to linear time by Hershberger and Suri~\cite{hs-msspm-97}.
For the geodesic radius of a simple polygon,
the first algorithm was given by Asano and Toussaint~\cite{at-cgcsp-85},
and its running time was $O(n^4\log n)$-time.
Later Pollack, Sharir, and Rote~\cite{psr-cgcsp-89} improved it to $O(n\log n)$ time.
Very recently, an optimal $O(n)$-time algorithm for the geodesic radius
of a simple polygon is presented by Ahn et al.~\cite{abbcko-ltagcsp-15}.

The case in which the domain has one or more holes is much less understood.
To the best of our knowledge, the only known result is a companion paper
in which an algorithm that computes the geodesic diameter of a polygonal domain
with $n$ corners and $h$ holes in $O(n^{7.73})$ or $O(n^7(\log n + h))$ time~\cite{bko-gdpd-13} is given.
As for computing the radius, no algorithm was known prior to this work,
even though the problem has been remarked repeatedly
as an important open problem~\cite[Open Problem 6]{m-gspno-00}.

The main difference between simple polygons and general domains lies on
the difficulty to determine and discretize the search space.
A key tool often used in these problems is, given a point $p\in\Poly$,
compute a farthest neighbor of $p$,
a point of $\Poly$ that is farthest away from $p$.
It is well known that in simple polygons, every farthest neighbor of any point $\Poly$
should be a corner of $\Poly$~\cite{at-cgcsp-85}.
This implies that the geodesic diameter of any simple polygon can only be determined
by two of its corners.
In particular, the problem is now reduced on how to efficiently search
among the $O(n^2)$ candidates that can potentially determine the geodesic diameter,
so one could try any bruteforce search on them.
The geodesic radius of a simple polygon can also be handled in a similar way:
Even though the corresponding center itself may be an interior point of $\Poly$,
its farthest neighbors are all corners.

For general polygonal domains, unfortunately, this is not the case any more.
A farthest neighbor of a point in a polygonal domain $\Poly$ having one or more holes
may not be a corner of $\Poly$, and even can be an interior point of $\Poly$.
This makes things complicated; the geodesic diameter can be determined
by two interior points, as shown in~\cite{bko-gdpd-13}.
This difference mainly causes the huge gap, $O(n)$ and $O(n^{7.73})$,
in the computational complexity of computing the geodesic diameter
between simple polygons~\cite{hs-msspm-97} and general domains~\cite{bko-gdpd-13}.

In this paper, we present an algorithm that, in $O(n^{12+\epsilon})$ time,
computes the geodesic radius and center.
At a glance, the time complexity might seem very high,
but it is comparable to the currently best algorithms for computing the geodesic diameter.
Indeed, a crucial observation for the diameter algorithm is
that there are at least five shortest paths between the two points determining
the geodesic diameter if the two points lie in the interior of $\Poly$~\cite{bko-gdpd-13}.
This observation leads to a bounded number of candidates for diametral pairs.
In Section~\ref{sec:neighbor}, we show that the geodesic radius and center
sometimes involves nine paths to determine.
This enlarges the search space considerably, thus a larger running time is somehow expected.

The rest of the paper is organized as follows.
After introducing preliminary definitions and concepts in Section~\ref{sec:pre},
we list geometric observations in Section~\ref{sec:neighbor}
that will be the base of our algorithm described in Section~\ref{sec:alg}.
Finally, Section~\ref{sec:conclusion} concludes the paper with possible lines of future research.

%%%%%%%%%%%%%%%%%%%%%%%%%%%%%%%%%%%%%%%%%%%%%%%%%%%%%
\section{Preliminaries} \label{sec:pre}
%%%%%%%%%%%%%%%%%%%%%%%%%%%%%%%%%%%%%%%%%%%%%%%%%%%%
Throughout the paper, we frequently use several topological concepts such as
open and closed subsets, neighborhoods, and the boundary $\bd A$
% and the interior $\intr A$
of a set $A$;
unless stated otherwise, all of them are derived with respect to
the standard topology on $\Real^d$ with the Euclidean norm $\|\cdot\|$ for fixed $d\geq 1$.
We also denote the straight line segment joining two points $a, b \in \Plane$ by $\seg{ab}$.

A \emph{polygonal domain} $\Poly$ with $h$ holes and $n$ corners\footnote{%
We reserve the term ``vertex'' for a 0-dimensional face of subdivisions of a certain space.}
is a connected and closed subset of $\Plane$ with $h$ pairwise disjoint holes. Each hole is a simple polygon contained in $\Poly$. Thus, the boundary $\bd \Poly$ of $\Poly$ consists of $h+1$ simple closed polygonal chains, and overall $n$ line segments. Each of the holes (and the outer boundary of $\Poly$) is regarded as an \emph{obstacle} that feasible paths in $\Poly$ are not allowed to cross. The \emph{geodesic distance} $\dist(p,q)$ between any two points $p,q$
in a polygonal domain $\Poly$ is defined to be the Euclidean length of a shortest feasible path
between $p$ and $q$, where the \emph{length} of a path is the sum of the Euclidean lengths of its segments. It is well known~\cite{m-spaop-96} that there always exists a shortest feasible path between any two points $p, q \in \Poly$, and the geodesic distance function $\dist(\cdot, \cdot)$ is thus well defined.

%The \emph{geodesic diameter} $\diam(\Poly)$ of a polygonal domain $\Poly$ is defined to be
%the largest geodesic distance between any two points of $\Poly$, that is,
% \[\diam(\Poly) = \max_{p,q \in \Poly} \dist(p,q).\]
%A pair $(p,q)$ of points in $\Poly$ that realizes the geodesic diameter $\diam(\Poly)$
%is called a \emph{diametral pair}.
The \emph{geodesic radius} $\rad(\Poly)$ of $\Poly$ is defined to be the min-max quantity:
 \[ \rad(\Poly) = \min_{p\in \Poly} \max_{q \in \Poly} \dist(p,q).\]
A \emph{geodesic center} of $\Poly$ is a point $c \in \Poly$ such that
 \[ \max_{q\in\Poly} \dist(c, q) = \rad(\Poly).\]
%For each point $p \in \Poly$, a \emph{farthest neighbor} of $p$ in $\Poly$ is a point $q \in \Poly$
%that realizes the maximum geodesic distance from $p$.

The set of all geodesic centers of $\Poly$ is denoted by $\cen(\Poly)$.
%and called the {\em center region}\footnote{For simplicity, we omit the term {\em region} whenever it is clear from the context whether we refer a point $p\in\Poly$ that is a center or the center region itself.}.
The purpose of this paper is to describe the first algorithm that exactly
computes the geodesic radius $\rad(\Poly)$ and centers $\cen(\Poly)$ of a given polygonal domain $\Poly$.

\subsection{Shortest path trees and shortest path maps}
\begin{figure*}[tb]
\centering
\includegraphics[width=.9\textwidth]{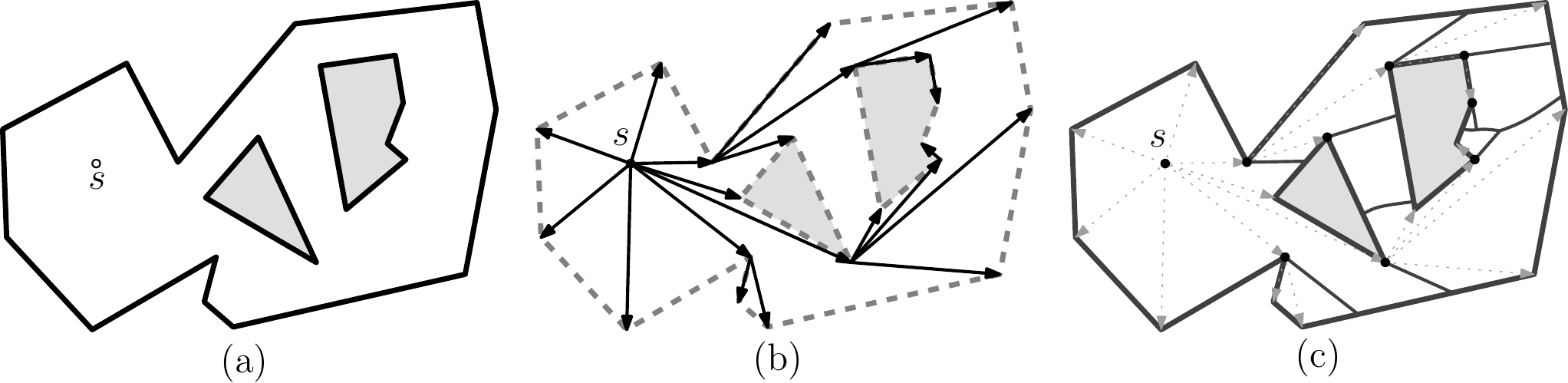}
\caption{(a) A polygonal domain $\Poly$ with two holes and a source point $s\in \Poly$.
         (b) The shortest path tree $\SPT(s)$ on $V \cup \{s\}$ with root $s$ (edges are directed
         towards descendants).
         (c) The shortest path map $\SPM(s)$ (depicted by solid segments). Corners $v \in V$ with non-empty region $\sigma_s(v)$ are marked by black dots.}
\label{fig:spm}
\end{figure*}

Let $V$ be the set of all corners of $\Poly$
and $\pi$ be a shortest path between any two points $s,t\in\Poly$.
This path $\pi$ is a polygonal chain that makes turns only at corners $V$ of $\Poly$~\cite{m-spaop-96}.
We represent $\pi$ by the sequence of traversed corners: $\pi = (s, v_1, \ldots, v_k, t)$
for some $v_1, \ldots, v_k\in V$.
Note that $k$ may be zero;
in this case, the shortest path $\pi$ is the segment $\seg{st}$ connecting the two endpoints,
and thus $\dist(s,t)=\|s-t\|$.
If two paths (with possibly different endpoints) induce the same sequence of corners
$(v_1, \ldots, v_k)$,
then they are said to have the same \emph{combinatorial structure}.

Given a source point $s\in \Poly$,
the \emph{shortest path tree} $\SPT(s)$ of $s$ is
a tree spanning $V \cup \{s\}$ embedded in $\Poly$
such that the unique path in $\SPT(s)$ from $s$ to each corner of $\Poly$
is a shortest path in $\Poly$. See for example \figurename~\ref{fig:spm}(b).

The \emph{shortest path map} $\SPM(s)$ of a fixed $s\in\Poly$ is a
decomposition of $\Poly$ into cells such that points in the same cell can be reached
from $s$ by shortest paths of the same combinatorial structure.
See \figurename~\ref{fig:spm}(c).
Each cell $\sigma_s(v)$ of $\SPM(s)$ is associated with a corner $v\in V$
which is the last corner of the shortest path $\pi$ from $s$ to any $t$ in the cell $\sigma_s(v)$.
Note that the path $\pi$ goes along the path in $\SPT(s)$ to $v$ and then reaches $t$ along $\seg{vt}$.
We also define the cell $\sigma_s(s)$ as the set of points $t\in\Poly$
such that $\seg{st} \subset \Poly$, i.e., $\dist(s, t) = \|s - t\|$.

Edges of $\SPM(s)$ either belong to $\bd\Poly$ or
are arcs on the boundary of two incident cells $\sigma_s(v_1)$ and $\sigma_s(v_2)$
determined by two corners $v_1, v_2 \in V\cup\{s\}$.
Edges of the second kind are hyperbolic arcs if $v_1$ and $v_2$ are not adjacent in $\SPT(s)$.
Moreover, there are two different shortest paths from $s$ to any point on an edge of $\SPM(s)$,
one via $v_1$ and the other via $v_2$.

Vertices of $\SPM(s)$ are either corners of $\Poly$, endpoints of an edge of the second kind above,
or a point $p \in \Poly$ incident to at least three faces $\sigma_s(v_1), \sigma_s(v_2), \sigma_s(v_3)$
for some corners $v_1,v_2,v_3 \in V \cup \{s\}$, yielding three different shortest paths from $s$.
Depending on which of the three cases it falls into,
each vertex of $\SPM(s)$ admits either $1$, $2$, or more different shortest paths from $s$ to the vertex, respectively.

The shortest path map $\SPM(s)$ %for a given point $s \in \Poly$
has $O(n)$ cells, edges, and vertices in total,
and can be computed in $O(n\log n)$ time using $O(n\log n)$ working space~\cite{hs-oaespp-99}.
For more details on shortest path maps, we refer to~\cite{m-spaop-96, hs-oaespp-99, m-gspno-00}.

%Note that, for fixed $s\in\Poly$,
%a point farthest from $s$ lies
%%its farthest point $t$ always occurs
%at either (1) a vertex of $\SPM(s)$, (2) an intersection between the boundary $\bd\Poly$
%and an edge of $\SPM(s)$, or (3) a corner in $V$.

\subsection{Path-length functions}
For any point $p\in\Poly$, we define its \emph{visibility region}
as the set $\Vis(p)$ of all points $q\in\Poly$ such that $\seg{pq}\subset \Poly$,
that is, points $q$ that \emph{sees} $p$.

Let $\pi$ be a shortest path from $s$ to $t$ for $s, t\in \Poly$.
If $\pi \neq \seg{st}$, then there are two corners $u,v\in V$ such that $u$ and $v$ are the first and last corners
along $\pi$ from $s$ to $t$, respectively.
Here, the path $\pi$ is formed to be the union of
$\seg{su}$, $\seg{vt}$ and a shortest path from $u$ to $v$.
Note that $u$ and $v$ are not necessarily distinct.
In order to realize such a path, $s$ must see $u$ and $t$ must see $v$,
that is, $s\in \Vis(u)$ and $t\in \Vis(v)$,

We define the \emph{path-length function} $\plf_{u,v}\colon \Vis(u)\times \Vis(v) \to \Real$
for any fixed pair of corners $u,v\in V$ to be
 \[\plf_{u,v}(s,t) := \|s-u\| + \dist(u,v) + \|v-t\|.\]
That is, $\plf_{u,v}(s,t)$ represents the length of paths from $s$ to $t$
that have a common combinatorial structure; going straight from $s$ to $u$,
following a shortest path from $u$ to $v$, and going straight to $t$.
Also, unless $\dist(s,t) = \|s-t\|$ (equivalently, $s\in \Vis(t)$),
the geodesic distance $\dist(s,t)$ can be expressed as the pointwise minimum of path-length functions.

\[ \dist(s,t) = \min_{u \in \Vis(s)\cap V,~ v\in \Vis(t)\cap V} \plf_{u,v}(s,t).\]

By definition of shortest path map $\SPM(s)$ and its cells $\sigma_s(v)$,
if $t \in \sigma_s(v)$ for some $v\in V$, then we have
$\dist(s, t) = \plf_{u, v}(s, t)$,
where $u \in V$ denotes the first corner along the shortest path from $s$ to $v$,
or equivalently, along the path from $s$ to $v$ in $\SPT(s)$.

%
%Consequently, we have two possibilities for a diametral pair $(s^*,t^*)$; either we have $\dist(s^*,t^*) = \|s^*-t^*\|$ or the pair $(s^*,t^*)$ is a local maximum of the lower envelope of several path-length functions.
%In the following, we will mainly study the latter case, since the former can be easily handled.

%%%%%%%%%%%%%%%%%%%%%%%%%%%%%%%%%%%%%%%%%%%%%%%%%%%%%
\section{Farthest Neighbors and Geodesic Centers} \label{sec:neighbor}
%%%%%%%%%%%%%%%%%%%%%%%%%%%%%%%%%%%%%%%%%%%%%%%%%%%%
In this section we introduce several tools that will be useful for discretizing the search space. % for the centers of a polygonal domain.
For any point $p \in \Poly$, we let $\prad(p)$ be the maximum geodesic distance we can obtain
when we fix one point as $p$, that is,
\[ \prad(p) := \max_{q \in \Poly} \dist(p, q).\]
We call a point $q\in \Poly$ a \emph{farthest neighbor} of $p \in \Poly$
if $\dist(p, q) = \prad(p)$.

Observe that the geodesic radius of $\Poly$ is the minimum possible value of $\prad(p)$
over all $p\in \Poly$, that is,
\[ \rad(\Poly) = \min_{p\in \Poly} \prad(p),\]
and a point that minimizes $\prad(p)$ is a geodesic center of $\Poly$.
%Note that each geodesic center is determined by its farthest neighbors.

%We classify the vertices of a shortest path map intro three groups:
%corners of $\Poly$,
%points lying on $\bd \Poly$ having two different shortest paths from $s$,
%or points in $\intr \Poly$ having at least three different shortest paths from $s$.
The following lemma gives us a way of computing farthest neighbors.
Recall that each vertex of the shortest path map $\SPM(p)$ for $p\in\Poly$
is either a corner of $\Poly$,
an endpoint of an edge lying on the boundary $\bd \Poly$,
or a point in the interior of $\Poly$ that is incident to three edges.
\begin{lemma} \label{lemma:farthest_neighbor}
 For any point $p\in \Poly$,
 any farthest neighbor of $p$ in $\Poly$ is a vertex of $\SPM(p)$.
\end{lemma}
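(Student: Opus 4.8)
The plan is to argue by contradiction: suppose $q$ is a farthest neighbor of $p$ that is *not* a vertex of $\SPM(p)$. Then $q$ lies either in the interior of a cell $\sigma_p(v)$ or in the relative interior of an edge of $\SPM(p)$. In either case I want to produce a nearby point $q'$ with $\dist(p,q') > \dist(p,q)$, contradicting maximality of $\prad(p)$ at $q$. The key local fact is that in a neighborhood of such a $q$, the geodesic distance from $p$ is given by a *single* path-length function of the form $\plf_{u,v}(p,\cdot)$ restricted to the last edge $\seg{vq}$, and this function is (locally) $\|v - \cdot\| + \text{const}$, which has no interior local maximum.

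First I would treat the case $q \in \intr \sigma_p(v)$ for some $v \in V \cup \{p\}$. By the description of $\SPM(p)$ recalled before the lemma, on this cell $\dist(p,t) = \dist(p,v) + \|v - t\|$ for all $t$ in the cell (with the convention $\dist(p,p)=0$ when $v = p$). Since $q$ is interior to the cell and $q \ne v$ (if $q = v$ were a corner it would be a vertex of $\SPM(p)$; and $q=p$ gives $\dist(p,q)=0$, which cannot be the max unless $\Poly$ is a single point), I can move from $q$ a small distance directly away from $v$ along the ray from $v$ through $q$; this stays in the cell for small enough displacement and strictly increases $\|v - t\|$, hence strictly increases $\dist(p,\cdot)$ — contradiction. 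Second, the case where $q$ lies in the relative interior of an edge $e$ of $\SPM(p)$: if $e \subset \bd\Poly$, then $e$ is a segment of an obstacle edge and $q$ lies in the relative interior of one of the $n$ boundary segments, so along $e$ the distance is again $\dist(p,v) + \|v-t\|$ for the relevant $v$; moving along $e$ in the direction that increases $\|v - t\|$ (one of the two directions must, since $q$ is not the foot of the perpendicular from $v$ — and even if it were, that foot would then have to be an endpoint of $e$ or we could perturb, but more carefully: the perpendicular foot of $v$ on the line through $e$, if it lies in the relative interior of $e$, is a point of local *minimum* of $\|v-\cdot\|$ along $e$, not a maximum, so it still cannot be a farthest neighbor unless $e$ is degenerate) we again get a contradiction. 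If instead $e$ is a hyperbolic or straight arc bounding two cells $\sigma_p(v_1), \sigma_p(v_2)$, then on a neighborhood of $q$ in $\Poly$ the distance equals $\min(\dist(p,v_1)+\|v_1-t\|, \dist(p,v_2)+\|v_2-t\|)$; this is the lower envelope of two functions each of which strictly increases in some direction, and a short direct argument (or invoking that neither summand has an interior maximum) shows the envelope has no local maximum at an interior point of $e$ either.

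The step I expect to be the main obstacle is handling the subtle sub-case in the boundary-edge argument where $q$ is the orthogonal projection of $v$ onto the supporting line of the obstacle segment containing $q$: there, both directions along $e$ decrease $\|v - t\|$, so the naive "move away" trick fails in the plane, but one must observe that $q$ is then a point of *minimum* distance along that edge and so, provided $e$ has positive length and $q$ is in its relative interior, the nearby endpoints of $e$ already have strictly larger distance — so $q$ is not farthest. I would also need to dispatch the trivial degenerate situations ($\Poly$ a single point, or $v$ lying on $\seg{vq}$'s supporting line forcing $q=v$) to make the "strictly increases" claims clean. Once these local monotonicity observations are in place, the contradiction is immediate in every case, and since the three cases exhaust all non-vertex points of $\SPM(p)$ (corner; relative interior of a boundary edge; relative interior of a non-boundary edge; interior of a cell), the lemma follows.
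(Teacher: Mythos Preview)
Your approach is essentially the same as the paper's---argue by contradiction that a non-vertex $q$ cannot be a local maximum of $\dist(p,\cdot)$---but the paper handles all your cases at once with one clean observation: whenever $q$ is not a vertex of $\SPM(p)$, one can find a short line segment $L$ lying in the closure of a single cell $\sigma_p(v)$ with $q$ in its relative interior (this works even on an edge, since edges are straight or hyperbolic and a short chord stays in one cell's closure); on $L$ the distance is $\mathrm{const}+\|v-t\|$, which is strictly convex and hence cannot attain its maximum at an interior point of $L$. This replaces your three-way case split, your perpendicular-foot worry (where, incidentally, you wrote that both directions \emph{decrease} $\|v-t\|$ when you meant \emph{increase}; your conclusion that it is a local minimum is correct regardless), and your hand-wavy step for the interior edge, where you assert that the lower envelope of the two functions $c_i+\|v_i-t\|$ has no local maximum at an interior point of $e$ without actually proving it---a claim that is not automatic for minima of convex functions in general and really does need the segment-in-one-cell argument (or the monotonicity of $\|v_i-t\|$ along a hyperbolic arc) to justify. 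So your plan is correct, but the convexity-on-a-segment formulation both shortens the proof and closes the one genuinely unfinished step in your outline.
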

\begin{proof}
%Before giving the proof, we note that this statement was well-known in the community. Indeed, it follows from a geometric argument of the Dijkstra paradigm, which is used for computing $\SPM(p)$. However, we have not been able to neither its statement nor its proof. Thus, in the following we add the proof for completeness.
Suppose for the sake of contradiction that there exists a farthest neighbor  $q\in \Poly$ of $p$ that is not a vertex of $\SPM(p)$.
%There are two cases: either $q$ lies in the interior of a cell $\sigma_p(v)$ for some $v\in V \cup\{s\}$.
%or $q$ lies on an edge $e$ incident to two cells $\sigma_p(v_1)$ and $\sigma_p(v_2)$
%for some $v_1, v_2 \in V \cup\{s\}$.
Then, there exists a sufficiently short line segment $L$ such that
$L$ is contained in the closure of some cell $\sigma_p(v)$ of $\SPM(p)$ for some $v \in V\cup \{p\}$
and contains $q$ in its relative interior.
This is always true even if $q$ lies on an edge of $\SPM(p)$
since every edge of the shortest path map is %quadrics.
either straight or hyperbolic.

Then, the function $f(x)=\dist(p,x)$ for $x\in L$ is represented as
$f(x) = \plf_{u,v}(p, x) = \|p-u\| + \dist(u, v) + \|v - x\|$ for some $u\in V\cup\{p\}$.
Observe that the function $f$ is convex on $L$ and has no plateau along its graph.
Since $q$ lies in the relative interior of $L$,
there always exists a point $y\in L$ such that $f(y)>f(q)$,
which contradicts the assumption that $q$ is a farthest neighbor of $p$.
\end{proof}

%This implies that there are no more than $O(n)$ farthest neighbors of each point $p\in \Poly$.
This observation is analogous to the fact that
farthest neighbors of any point in a simple polygon are its corners~\cite{psr-cgcsp-89}.
However, vertices of shortest path maps $\SPM(p)$ may lie in the interior of $\Poly$.
This means that a geodesic radius and center may be determined by interior points,
whereas this never happens for simple polygons.

%\figurename~\ref{fig:intcenter} we give a polygonal domain  $\Poly$
%such that its unique geodesic center $c$ and its farthest neighbors lie in the interior of $\Poly$.
%In such a case, of course, those farthest neighbors are vertices of
%the shortest path map $\SPM(c)$ as proved in Lemma~\ref{lemma:farthest_neighbor}.
%See \figurename~\ref{fig:intcenter} for such an example.

\begin{figure}[tb]
\centering
\includegraphics[width=.5\textwidth]{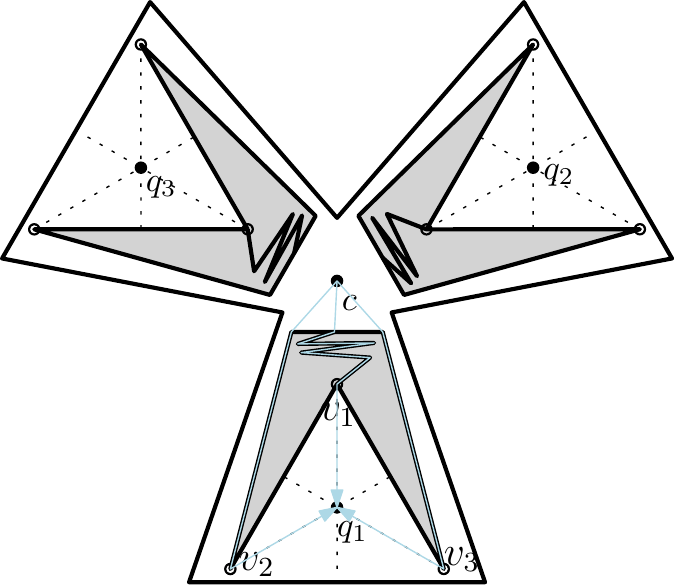}
\caption{A polygonal domain instance with a unique interior geodesic center $c$.
 The three farthest neighbors $q_1$, $q_2$, and $q_3$ of the center $c$
 lie in the interior of the domain as well.
 Observe that there are three distinct shortest paths between the center $c$
 and each of its farthest neighbors, and thus $9$ shortest paths of equal length in total.}
\label{fig:intcenter}
\end{figure}

\figurename~\ref{fig:intcenter} illustrates an example polygonal domain
such that its unique geodesic center $c$ and its farthest neighbors lie in its interior.
The domain shown in \figurename~\ref{fig:intcenter} consists of
three identical regions arranged in a symmetric way:
each part contains two holes that almost fit together forming a very narrow corridor between them.
We claim that $c$ is the unique geodesic center and $c$ has exactly three furthest neighbors:
$q_1$, $q_2$, and $q_3$.

Observe first that each $q_i$ is a vertex of the shortest path map $\SPM(c)$ for $c$.
Each vertex of $\SPM(c)$ lying in the interior admits three distinct shortest paths from $c$.
(Moreover, because of the way the regions are defined,
each $q_i$ is also the farthest neighbor of $c$ within the corresponding region.
Due to symmetry in the construction, we observe that no point other than $c$ can be closer
to all of $q_1$, $q_2$, and $q_3$ at the same time.
Thus, the point $c$ is the only geodesic center of this polygonal domain.

%Consider the point $c$ in the middle of the polygonal domain.
%This gives a lower bound for the geodesic radius.
%This implies that $c$ is the unique geodesic center.

Note that this construction is slightly degenerate since it has a few symmetries.
However, such degeneracies can be removed by a small perturbation on the location of the corners.
This is possible because the center $c$ is ``stable'' in the sense that
a sufficiently small perturbation on the corners of the domain will only imply
a small change in the location of $c$ and points $q_i$.

%%%%%%%%%%%%%%%%%%%%%%%%%%%%%%%%%%%%%%%%%%%%%%%%%%%%%
\section{Algorithm} \label{sec:alg}
%%%%%%%%%%%%%%%%%%%%%%%%%%%%%%%%%%%%%%%%%%%%%%%%%%%%
In this section, we describe our algorithm for computing the radius $\rad(\Poly)$ and
all centers $\cen(\Poly)$ of an input polygonal domain $\Poly$.
Recall that the problem of computing the radius and center can be seen
as a minimization problem under the objective function $\prad$ over $\Poly$.
Thus, our approach is to decompose $\Poly$ into cells, and find candidate centers in each cell.

For any subset $\sigma\subseteq \Poly$ of the domain $\Poly$,
we call the minimum value of $\prad(p)$ over $p\in \sigma$
the \emph{$\sigma$-constrained geodesic radius},
and each point in $\sigma$ that attains the minimum
is a \emph{$\sigma$-constrained geodesic center}.
Clearly, in any decomposition $\{\sigma_1, \sigma_2, \ldots\}$ of $\Poly$,
the geodesic radius is the minimum of $\sigma_i$-constrained geodesic radii over all $i$,
and the points that attain the minimum value form the geodesic centers $\cen(\Poly)$ of $\Poly$.

In this paper, we use the \emph{SPM-equivalence} decomposition.
This decomposition subdivides a polygonal domain $\Poly$ into cells
such that for all points $s$ in a common cell $\sigma$ of $\DecompSPM$,
their shortest path maps $\SPM(s)$ are topologically equivalent.
More precisely, two shortest path maps $\SPM(s_1)$ and $\SPM(s_2)$ are said to be \emph{topologically equivalent}
if their underlying labeled plane graphs are isomorphic.
This structure was introduced by  Chiang and Mitchell~\cite{cm-tpespqp-99}
as a means to devise efficient data structures that support two-point queries for Euclidean shortest paths.
In their work, they show that the decomposition $\DecompSPM$ has $O(n^{10})$ complexity
and can be computed in $O(n^{10} \log n)$ time.

%The SPT-equivalence decomposition $\DecompSPT$ of a polygonal domain $\Poly$ subdivides $\Poly$
%into cells such that the shortest path tree $\SPT(s)$ stays constant
%for all points $s$ in each cell $\sigma$ of $\DecompSPT$.
%It is known that such a decomposition can be computed by
%overlaying the $n$ shortest path maps $\SPM(v)$ for every corner $v\in V$ of $\Poly$.
%Consequently, the decomposition $\DecompSPT$ is uniquely determined.
%The complexity of $\DecompSPT$ is $O(n^4)$ since each shortest path map $\SPM(v)$
%consists of $O(n)$ edges.

%\paragraph{SPM-equivalence decomposition}

An additional property of this subdivision (also shown by Chiang and Mitchell~\cite{cm-tpespqp-99}) is that,
for any cell $\sigma$ of $\DecompSPM$, the elements of $\SPM(s)$ (i.e., vertices and edges)
can be explicitly described by algebraic functions of $s \in \sigma$.
In this manner, the shortest path map $\SPM(s)$ for any point $s \in \sigma$
can be parameterized within a fixed cell $\sigma$.

Let $\sigma$ be any cell of $\DecompSPM$.
Since all shortest path maps $\SPM(s)$ within $s\in \sigma$ are topologically equivalent,
they must all have the same number $m$ of vertices.
%From now on, for simplicity in the notation we avoid the subindex and simply denote it by $m$.
We are particularly interested in coordinates of the vertices $v_1, v_2, \ldots, v_m$ of $\SPM(s)$
as functions of $s\in\sigma$.
Recall that the vertices $v_i$ of $\SPM(s)$ must include the corners $V$ of $\Poly$;
thus, if $v_i\in V$ for $i=1,2,\ldots,m$,
then $v_i(s)$ will be a constant function that maps to a unique corner of $\Poly$.

For $i=1,2,\ldots,m$, we define the function $f_i\colon \sigma \rightarrow \Real$ to be
$f_i(s)= \dist(s, v_i(s))$ for $s \in \sigma$.
That is, this function maps $s$ to the geodesic distance from $s$ to $v_i(s)$.
We then consider the upper envelope $\max_i f_i(s)$ of the $m$ functions,
which maps $s$ to its maximum geodesic distance over all the vertices $v_i(s)$ of $\SPM(s)$.
By Lemma~\ref{lemma:farthest_neighbor},
the farthest neighbors of $s$ must be among the $v_i(s)$, and it thus holds that
\[ \prad(s) = \max_{i=1,2,\ldots, m} f_i(s).\]
In order to find the $\sigma$-constrained geodesic radius and center,
it suffices to compute and search the upper envelope of the $m$ functions $f_i$.

In order to obtain an explicit expression of $f_i(s)$, we observe the following property.
\begin{lemma} \label{lemma:whole_vis}
 For any cell $\sigma$ of $\DecompSPM$ and $i\in\{1,2,\ldots, m\}$, one of the following holds
 for all $s\in \sigma$:
 $\sigma \subseteq \Vis(v_i(s))$ or $\sigma \cap \Vis(v_i(s))=\emptyset$.
\end{lemma}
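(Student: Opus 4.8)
The plan is to exploit the defining property of the SPM-equivalence decomposition: as $s$ ranges over a fixed cell $\sigma$, the combinatorial structure of $\SPM(s)$ — its labeled plane graph — does not change. First I would fix a cell $\sigma$ and an index $i$, and look at the vertex $v_i(s)$ of $\SPM(s)$. Recall from Section~\ref{sec:pre} that a vertex of $\SPM(s)$ is one of three types: a corner of $\Poly$, an endpoint of a second-kind edge, or an interior triple point. In every case the vertex carries a label recording which corner(s) of $V\cup\{s\}$ realize the shortest path(s) from $s$ reaching it; in particular it records the \emph{last} corner $u_i\in V\cup\{s\}$ before $v_i(s)$ along such a shortest path. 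Since topological equivalence preserves these labels across $\sigma$, the corner $u_i$ is the same for all $s\in\sigma$, and we have $\dist(s,v_i(s)) = \plf_{\cdot,u_i}(\cdots) + \|u_i - v_i(s)\|$ with $v_i(s)$ lying in the closure of the cell $\sigma_s(u_i)$ of $\SPM(s)$, so $v_i(s)\in\Vis(u_i)$ for every $s\in\sigma$.

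The next step is the key dichotomy. The statement to prove is about $\Vis(v_i(s))$, the set of points that see $v_i(s)$; equivalently, whether $s$ itself sees $v_i(s)$, i.e. whether $\seg{s\,v_i(s)}\subset\Poly$, uniformly over $\sigma$. I would argue that the relative position of $s$ with respect to the cell $\sigma_s(s)$ of $\SPM(s)$ (the ``visible from $s$'' cell) is itself a combinatorial datum fixed by the topological type. Concretely: the vertex $v_i(s)$ either lies in the closure of the cell $\sigma_s(s)$ — in which case the shortest path from $s$ to $v_i(s)$ is the segment $\seg{s\,v_i(s)}$, so $s\in\Vis(v_i(s))$ — or it does not, in which case every shortest path from $s$ to $v_i(s)$ bends at a corner, so $\dist(s,v_i(s))>\|s-v_i(s)\|$ and $s\notin\Vis(v_i(s))$. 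Whether $v_i$ is adjacent (in the labeled graph of $\SPM(s)$) to the cell $\sigma_s(s)$ is preserved under topological equivalence, so exactly one of these alternatives holds for \emph{all} $s\in\sigma$ simultaneously. This already gives the lemma with ``$s\in\Vis(v_i(s))$ for all $s\in\sigma$'' in place of the stronger ``$\sigma\subseteq\Vis(v_i(s))$''.

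To get the full strength — $\sigma\subseteq\Vis(v_i(s))$, i.e. \emph{every} point of $\sigma$ sees $v_i(s)$ — I would then observe that in the first case above, $v_i(s)$ lies on the boundary between the ``$s$-visible'' cell and a neighboring cell, and this boundary is an edge of $\SPM(s)$ that is a straight segment collinear with an obstacle edge or with a corner; the only obstruction to a point $s'\in\sigma$ seeing $v_i(s)$ would be a hole edge cutting the segment $\seg{s'\,v_i(s)}$, but such an edge would have to appear as a feature separating $s'$ from the relevant cell of $\SPM(s')$, changing its combinatorial type and contradicting $s'\in\sigma$. I expect the main obstacle to be making this last uniformity argument airtight: one must rule out, using only topological equivalence of the maps, that the segment $\seg{s'\,v_i(s)}$ grazes past a hole corner for some $s'\in\sigma$ while staying clear for others. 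The clean way to handle this is to note that, since all the $v_j(s')$ vary continuously (indeed algebraically) over $\sigma$ and the cell adjacency structure is constant, the set $\{s'\in\sigma : s'\in\Vis(v_i(s'))\}$ is both open and closed in $\sigma$; as $\sigma$ is connected and (by the previous paragraph) this set is nonempty, it equals $\sigma$, and continuity of the endpoints upgrades ``$s'$ sees $v_i(s')$'' to ``$s'$ sees $v_i(s)$'' along the whole cell.
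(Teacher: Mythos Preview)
Your second paragraph is exactly the paper's proof: if some $s\in\sigma$ sees $v_i(s)$ then the shortest path from $s$ to $v_i(s)$ is the segment $\seg{s\,v_i(s)}$, and by combinatorial invariance of $\SPM$ over $\sigma$ the same holds for every $s'\in\sigma$. The paper stops there. You are right that this only yields the dichotomy ``$s\in\Vis(v_i(s))$ for all $s\in\sigma$'' versus ``for no $s\in\sigma$'', which is strictly weaker than the stated conclusion $\sigma\subseteq\Vis(v_i(s))$ or $\sigma\cap\Vis(v_i(s))=\emptyset$; the paper's own proof does not close this gap.

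Your attempt to upgrade in the third paragraph is the right instinct but the connectedness/continuity sketch does not actually deliver: knowing that each $s'$ sees $v_i(s')$ does not, by continuity alone, force $s'$ to see $v_i(s)$ for a \emph{different} $s$, since $v_i(\cdot)$ may move. The clean fix is the observation you almost make (and which the paper records as its next lemma): in the visible case the shortest path from $s$ to $v_i(s)$ is unique, hence $v_i(s)$ must be a corner of $\Poly$, so $v_i(s)=v$ is a constant point independent of $s$. Then the weaker statement ``$s'\in\Vis(v_i(s'))$ for all $s'\in\sigma$'' reads $\sigma\subseteq\Vis(v)=\Vis(v_i(s))$, which is exactly what you want. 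The second alternative $\sigma\cap\Vis(v_i(s))=\emptyset$ in the non-visible case is not established by either your argument or the paper's, but it is also never used downstream; only the weaker dichotomy is needed for Lemmas~\ref{lemma:visible_vertex_spm} and~\ref{lemma:dist_func_spm}.
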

\begin{proof}
This follows from the fact that shortest paths are topologically equivalent within $\sigma$.
If $s\in \sigma$ sees $v_i(s)$, then the shortest path $\pi$ from $s$ to $v_i(s)$ is $\seg{sv_i(s)}$.
Since the shortest path $\pi'$ from any other $s'\in \sigma$ to $v_i(s')$
must have the same combinatorial structure as $\pi = \seg{sv_i(s)}$ from $s$ to $v_i(s)$,
it holds that $\pi' = \seg{s'v_i(s')}$ for any $s' \in \sigma$.
\end{proof}

Lemma~\ref{lemma:whole_vis} shows that the visibility for the vertex $v_i(s)$ is preserved within cell $\sigma$.
Hence, we can simply say that $v_i(s)$ is always \emph{visible} from $\sigma$
or never visible from $\sigma$.
Since corners of $\Poly$ are also vertices of $\SPM(s)$,
they are also always or never visible from $\sigma$.
%We slightly abuse notation and use $\Vis(\sigma)$ to denote the set of corners that are visible from $\sigma$.

%or $v_i(s)$ is always \emph{invisible} from $s$. Since $\sigma$ is a cell of the SPM-equivalence decomposition $\DecompSPM$, there is no other situation such that $v_i(s_1)$ is visible while $v_i(s_2)$ is invisible for $s_1, s_2\in \sigma$; otherwise, it causes a change in their shortest path trees and thus a change of their shortest path maps. Thus, we shall call each $v_i$ \emph{visible} or \emph{invisible} accordingly.

\begin{lemma} \label{lemma:visible_vertex_spm}
 For any cell $\sigma$ of $\DecompSPM$ and $i\in\{1,2,\ldots, m\}$,
 if vertex $v_i$ is visible from $\sigma$, then
 $v_i$ is a corner $v\in V$ of $\Poly$.% such that $v \in \Vis(\sigma)$.
\end{lemma}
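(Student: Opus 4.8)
The plan is to show that any vertex of $\SPM(s)$ that lies in the interior of $\Poly$ cannot be visible from all of $\sigma$, which by Lemma~\ref{lemma:whole_vis} forces such a vertex to be a corner of $\Poly$ — and corners are precisely the elements of $V$. Recall the trichotomy for vertices of $\SPM(s)$ stated just before Lemma~\ref{lemma:farthest_neighbor}: a vertex is either a corner of $\Poly$, an endpoint of an edge lying on $\bd\Poly$, or an interior point of $\Poly$ incident to at least three cells and hence admitting at least three distinct shortest paths from $s$. So the two non-corner cases to rule out are (i) $v_i(s)$ is an endpoint of an edge of $\SPM(s)$ that lies on $\bd\Poly$, and (ii) $v_i(s)$ is an interior vertex with three shortest paths from $s$.

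For case (ii): suppose $v_i(s)$ lies in $\intr\Poly$ and is visible from $\sigma$. By Lemma~\ref{lemma:whole_vis}, $\sigma \subseteq \Vis(v_i(s))$ — in fact, rereading the proof of that lemma, visibility being preserved means that for every $s' \in \sigma$ the shortest path from $s'$ to $v_i(s')$ is the straight segment $\seg{s'v_i(s')}$, so $v_i(s')$ has a \emph{unique} shortest path from $s'$. But an interior vertex of $\SPM(s')$ by definition has at least three shortest paths from $s'$, a contradiction. Hence no interior vertex of $\SPM(s)$ can be visible from $\sigma$. This is the cleanest part of the argument and essentially falls out of the definitions.

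For case (i): suppose $v_i(s)$ is the endpoint of an $\SPM(s)$-edge contained in $\bd\Poly$ but $v_i(s)$ is not a corner of $\Poly$. Such a point lies in the relative interior of some boundary segment of $\Poly$. The key observation is that for a point $t$ in the relative interior of a boundary edge $e$, the shortest path from $s$ reaches $t$ along a segment, and one can slide $t$ slightly along $e$ in the direction that decreases the distance — precisely the convexity-with-no-plateau argument used in the proof of Lemma~\ref{lemma:farthest_neighbor}. The subtlety is why $v_i(s)$ would be an endpoint of an $\SPM(s)$-edge along $\bd\Poly$ at all: edges of $\SPM(s)$ along $\bd\Poly$ are maximal boundary pieces belonging to a single cell, so their endpoints are either corners of $\Poly$ or points where the shortest-path structure changes, i.e. points incident to an $\SPM(s)$-edge of the second kind entering the interior; at such a point there are two distinct shortest paths from $s$, so again $v_i(s)$ is not visible from $\sigma$ (visibility would force a unique shortest path, as in case (ii)). I would therefore handle (i) by the same mechanism as (ii): a visible vertex has a unique shortest path from every point of $\sigma$, but a boundary-edge endpoint of $\SPM(s)$ that is not a corner is, by the vertex classification, a point with at least two shortest paths.

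The main obstacle I anticipate is making the case analysis of "which kinds of $\SPM(s)$-vertices can be boundary points" fully rigorous against degenerate configurations — e.g. a corner of $\Poly$ that happens to be reachable by a straight segment, or a boundary point that is simultaneously an endpoint of a second-kind edge and visible along a degenerate segment. The safe route, and the one I would write up, is to avoid a geometric sliding argument entirely and lean purely on the vertex classification from Section~\ref{sec:pre} together with the uniqueness-of-shortest-path consequence of Lemma~\ref{lemma:whole_vis}: a vertex visible from $\sigma$ has exactly one shortest path from each point of $\sigma$; every non-corner vertex of $\SPM(s)$ has at least two shortest paths from $s$; therefore a visible vertex must be a corner, i.e. an element of $V$.
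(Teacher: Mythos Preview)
Your proposal is correct and takes essentially the same approach as the paper: visibility of $v_i$ from $\sigma$ forces the shortest path from $s$ to $v_i(s)$ to be the unique segment $\seg{sv_i(s)}$, and by the vertex classification in Section~\ref{sec:pre} the only $\SPM(s)$-vertices admitting a single shortest path from $s$ are corners of $\Poly$. The paper's proof compresses your case analysis into a one-line appeal to that classification, but the argument is the same.
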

\begin{proof}
If $v_i$ is always visible from $\sigma$,
then the shortest path from $s$ to $v_i(s)$ is just the straight line segment $\seg{sv_i(s)}$
and therefore is unique.
However, as discussed in Section~\ref{sec:pre},
the only way in which a vertex of $\SPM(s)$ has a single shortest path from $s$ is
when it is a corner of $\Poly$.
%The opposite direction follows from definition of $\Vis(\sigma)$.
\end{proof}

\begin{figure}[t]
\centering
%\fbox{\begin{minipage}{.99\columnwidth}
\fbox{\begin{minipage}{.99\textwidth}
\begin{algorithmic}[1]
\Procedure{GeodesicCenter}{$\Poly$}
\State Compute the SPM-equivalence decomposition $\DecompSPM$ of $\Poly$.
 \For{each cell $\sigma$ of $\DecompSPM$}
 \State Specify the combinatorial structure of the shortest path maps $\SPM(s)$ for $s\in \sigma$.
 \State Identify the parameterized equations of the vertices of $\SPM(s)$.
% (2) the intersection points between, and (3) the vertices of $\Poly$ (that are constant).
 \State Let $v_1(s), \ldots, v_m(s)$ be the parameterized points identified by the above step.
 \State Let $f_i(s) := \dist(s, v_i(s))$ be the $m$ bivariate functions for $s\in \sigma$.
 \State Compute the upper envelope $\mathcal{U}_\sigma$ of the $m$ graphs $\{z= f_i(s)\}$.
 \State Find all points $c_\sigma$ with the lowest $z$-coordinate in $\mathcal{U}_\sigma$.
 \State Store them as the $\sigma$-constrained geodesic centers with its $z$-value.
 \EndFor
% \State Let $\mathcal{C}=\{ c_\sigma \colon \sigma\in \DecompSPM\}$ be the collection of constrained geodesic centers
 \State \Return All $c_\sigma$'s having the smallest $z$-value as $\cen(\Poly)$, and its $z$-value as $\rad(\Poly)$.
% \State \Return the $z$-value as the radius of $\Poly$
\EndProcedure
\end{algorithmic}
\end{minipage}
}
\caption{An $O(n^{12+\epsilon})$-time algorithm for computing $\cen(\Poly)$ and $\rad(\Poly)$
 of a polygonal domain $\Poly$.}
%using the SPM-equivalence decomposition}
\label{fig:alg_spm}
\end{figure}

\begin{lemma} \label{lemma:dist_func_spm}
For any cell $\sigma$ of $\DecompSPM$ and $i\in \{1,2,\ldots, m\}$, it holds that
 \[ f_i(s) = %\dist(s, v_i(s)) =
   \begin{cases}
      \|s - v_i(s) \|  & \text{if $v_i$ is visible from $\sigma$}, \\
      \plf_{u_i, w_i}(s, v_i(s)) & \text{otherwise},
    \end{cases}\]
 where $u_i, w_i \in V$ are two corners of $\Poly$ uniquely determined by $i$.
\end{lemma}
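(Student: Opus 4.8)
The plan is to dispose of the visible case at once using the two preceding lemmas, and to treat the non-visible case by reading off the first and last corners of a suitable shortest path and then invoking the topological equivalence of the maps $\SPM(s)$ over $\sigma$ to see that these corners do not depend on $s$.

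\emph{The visible case.} Suppose $v_i$ is visible from $\sigma$. By Lemma~\ref{lemma:visible_vertex_spm}, $v_i$ is a corner $v \in V$, so $v_i(s) = v$ for all $s \in \sigma$; by Lemma~\ref{lemma:whole_vis} and the hypothesis, $\sigma \subseteq \Vis(v)$, so $\seg{sv} \subset \Poly$ and $\seg{sv}$ is itself a shortest path. Hence $f_i(s) = \dist(s, v_i(s)) = \|s - v\| = \|s - v_i(s)\|$, the first case of the statement.

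\emph{The non-visible case.} Suppose $v_i$ is not visible from $\sigma$. By Lemma~\ref{lemma:whole_vis} and the hypothesis, $s \notin \Vis(v_i(s))$ for every $s \in \sigma$, so the straight segment $\seg{s\,v_i(s)}$ is not feasible and every shortest path from $s$ to $v_i(s)$ turns at least once. Fix such a shortest path $\pi_i(s)$ and let $u_i$ and $w_i$ be its first and last corners; since the path turns, $u_i, w_i \in V$. As $\pi_i(s)$ is a shortest path, its portion before the first turn, $\seg{su_i}$, and its portion after the last turn, $\seg{w_i v_i(s)}$, both lie in $\Poly$, so $s \in \Vis(u_i)$ and $v_i(s) \in \Vis(w_i)$; moreover $\pi_i(s)$ is the union of $\seg{su_i}$, a shortest $u_i$--$w_i$ path, and $\seg{w_i v_i(s)}$, whence
\[ f_i(s) = \dist(s, v_i(s)) = \|s - u_i\| + \dist(u_i, w_i) + \|w_i - v_i(s)\| = \plf_{u_i, w_i}(s, v_i(s)). \]
It remains to choose $\pi_i(s)$ so that $u_i$ and $w_i$ do not depend on $s \in \sigma$. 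Here the defining property of $\DecompSPM$ enters: for all $s \in \sigma$ the labeled plane graphs underlying $\SPM(s)$ are isomorphic, and so are the shortest path trees $\SPT(s)$; therefore we may always take $\pi_i(s)$ with the same combinatorial structure for all $s \in \sigma$ (for instance, the one whose last corner $w_i$ is the label of a prescribed cell of $\SPM(s)$ incident to the vertex $v_i$, reached from $s$ along the $\SPT(s)$-path to $w_i$). With such a choice, $u_i$ and $w_i$ are corners of $\Poly$ determined by $i$ (and $\sigma$) alone.

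The only delicate step is this last selection: when $v_i(s)$ is an interior vertex of $\SPM(s)$ there are three shortest paths from $s$ to it, and their first and last corners may differ, so one must pin down a single one consistently over the whole cell $\sigma$ — which is exactly what the isomorphism of the maps $\SPM(s)$ (hence of the trees $\SPT(s)$) guarantees. Everything else is a direct unwinding of the definition of $\plf_{u,w}$ together with the fact that along a shortest path in a polygonal domain the portions before the first turn and after the last turn are straight segments lying inside $\Poly$.
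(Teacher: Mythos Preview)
Your proof is correct and follows essentially the same route as the paper: the visible case via Lemma~\ref{lemma:visible_vertex_spm}, and the non-visible case by extracting the first and last corners $u_i,w_i$ of a shortest path and invoking the SPM-equivalence over $\sigma$ to conclude that a shortest path with this combinatorial structure exists for every $s\in\sigma$. If anything, you are slightly more careful than the paper in explicitly addressing the consistent selection among the (up to three) shortest $s$--$v_i(s)$ paths; the paper simply fixes one $s_0$, reads off $u_i,w_i$ there, and appeals to the equivalence.
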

\begin{proof}
The case in which $v_i$ is visible follows from Lemma~\ref{lemma:visible_vertex_spm}.
Thus, it suffices to consider the opposite case. Pick any point $s_0 \in \sigma$,
and consider a shortest path $\pi$ from $s_0$ to $v_i(s_0)$.
Let $u_i\in V$ and $w_i\in V$ be the first and the last corners of $\Poly$ along $\pi$.
Since $v_i$ is not visible from $\sigma$ (and in particular from $s_0$),
no shortest path from $s_0$ to $v_i(s_0)$ can be the straight line segment $\seg{s_0v_i(s_0)}$.
Therefore, such corners $u_i$ and $w_i$ must exist.
This implies that $f_i(s_0) = \dist(s_0, v_i(s_0))= \plf_{u_i, w_i}(s_0, v_i(s_0))$.
By the definition of the SPM-equivalence decomposition $\DecompSPM$,
for any $s\in \sigma$, the shortest paths from any $s\in \sigma$ to $v_i(s)$
have the same combinatorial structure.
Therefore, the path whose first corner is $u_i$ and last corner is $w_i$ must also be a shortest path.
Hence, the lemma follows.
\end{proof}
%\complain{Here, we might need an analogy of Lemma~\ref{lemma:visible_cells};
%if $s$ sees $v_i(s)$ for some $s\in \sigma$, then the same holds for any $s\in \sigma$.}

By combining these two observations, we can explicitly construct the functions $f_1, f_2, \ldots, f_m$,
and exploit them to compute the geodesic radius and center.
The pseudocode of our algorithm can be found in \figurename~\ref{fig:alg_spm}.

\begin{theorem} \label{theorem:algo_spm}
 The algorithm described in \figurename~\ref{fig:alg_spm} correctly computes
 the geodesic radius and center of a polygonal domain with $n$ corners
 in $O(n^{12+\epsilon})$ time for any $\epsilon>0$.
\end{theorem}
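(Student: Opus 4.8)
The plan is to argue correctness and running time separately, since the preceding lemmas have already done most of the conceptual work.

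For \textbf{correctness}, I would first note that $\{\sigma\}$ over all cells of $\DecompSPM$ is a decomposition of $\Poly$, so by the observation made just after the definition of the $\sigma$-constrained geodesic radius, it suffices to show that for each cell $\sigma$ the algorithm correctly computes a $\sigma$-constrained geodesic center and radius. Fix a cell $\sigma$. By Chiang and Mitchell~\cite{cm-tpespqp-99}, within $\sigma$ the combinatorial structure of $\SPM(s)$ is fixed and its $m$ vertices $v_1(s),\ldots,v_m(s)$ are given by explicit algebraic functions of $s$; these are exactly the objects produced in lines 4--6. By Lemma~\ref{lemma:farthest_neighbor}, every farthest neighbor of $s$ is a vertex of $\SPM(s)$, hence $\prad(s)=\max_i f_i(s)$ where $f_i(s)=\dist(s,v_i(s))$. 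By Lemmas~\ref{lemma:whole_vis}, \ref{lemma:visible_vertex_spm}, and~\ref{lemma:dist_func_spm}, each $f_i$ has the closed form stated in Lemma~\ref{lemma:dist_func_spm} — either $\|s-v_i(s)\|$ or $\plf_{u_i,w_i}(s,v_i(s))=\|s-u_i\|+\dist(u_i,w_i)+\|v_i(s)-w_i\|$ with $u_i,w_i$ fixed corners — so each $f_i$ is an explicit algebraic bivariate function and the graph $\{z=f_i(s)\}$ is a fixed algebraic surface patch over $\sigma$. Therefore the upper envelope $\mathcal{U}_\sigma$ computed in line 8 is exactly the graph of $\prad$ restricted to $\sigma$, its lowest points (line 9) are precisely the $\sigma$-constrained geodesic centers, and the lowest $z$-value is the $\sigma$-constrained geodesic radius. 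Taking the minimum over all cells in line 12 then yields $\rad(\Poly)$ and $\cen(\Poly)$.

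For the \textbf{running time}, I would break the cost into the decomposition step and the per-cell work. Line 2 costs $O(n^{10}\log n)$ by~\cite{cm-tpespqp-99}, and $\DecompSPM$ has $O(n^{10})$ cells. Within a single cell $\sigma$: reading off the combinatorial structure, the parameterized vertex equations, and the corner pairs $u_i,w_i$ (lines 4--7) takes time polynomial in $m=O(n)$; each $f_i$ is a constant-degree algebraic function, so the $m=O(n)$ surfaces have bounded description complexity. Their upper envelope in three dimensions (line 8) can be computed in $O(n^{2+\epsilon})$ time by the standard bound on lower/upper envelopes of bivariate functions of bounded degree (Sharir; Agarwal--Sharir), and extracting the minimum on the envelope (line 9) costs no more. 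Multiplying the per-cell cost $O(n^{2+\epsilon})$ by the $O(n^{10})$ cells dominates the $O(n^{10}\log n)$ decomposition cost, giving the claimed $O(n^{12+\epsilon})$ bound; absorbing the $\log n$ and lower-order factors into the $\epsilon$ completes the estimate.

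The \textbf{main obstacle} is justifying the $O(n^{2+\epsilon})$ per-cell bound cleanly: this requires checking that the functions $f_i$ (in both cases of Lemma~\ref{lemma:dist_func_spm}) are algebraic of degree bounded by an absolute constant independent of $n$ — which holds since each $v_i(s)$ is a bounded-degree algebraic function of $s$ on $\sigma$ by~\cite{cm-tpespqp-99}, and $\plf_{u_i,w_i}$ adds only the fixed number $\dist(u_i,w_i)$ plus two Euclidean distances — and then invoking the known near-quadratic bound on the complexity and computation of upper envelopes of such surface patches. One must also be slightly careful that $\mathcal{U}_\sigma$ is the envelope over the (semialgebraic) region $\sigma$ rather than all of $\Plane$, but this only restricts the domain and does not increase the combinatorial complexity. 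Everything else is bookkeeping over the $O(n^{10})$ cells.
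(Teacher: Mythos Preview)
Your proposal is correct and follows essentially the same approach as the paper: correctness via Lemma~\ref{lemma:farthest_neighbor} and Lemmas~\ref{lemma:whole_vis}--\ref{lemma:dist_func_spm}, and the time bound via the $O(n^{10})$ cells of $\DecompSPM$ times the $O(n^{2+\epsilon})$ cost of an upper envelope of $O(n)$ constant-degree bivariate algebraic functions. The paper cites Halperin--Sharir~\cite{hm-nbletdavt-94} for the envelope bound in three dimensions rather than Agarwal--Sharir, but otherwise your argument matches (and is in fact slightly more explicit about the bounded-degree hypothesis and the restriction to the semialgebraic domain $\sigma$).
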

\begin{proof}
The correctness follows from the discussion above.
That is, any center of $\Poly$ corresponds to a minimum of the upper envelope of the functions $f_1, f_2, \ldots, f_m$.

In order to show the time bound, we need an efficient tool to compute the upper envelope of functions.
Given a collection of $N$ algebraic surface patches in $\Real^d$,
we can compute their lower (or upper) envelope in $O(N^{d-1+\epsilon})$ time
using the algorithms of Halperin and Sharir~\cite{hm-nbletdavt-94} (for $d=3$)
or of Sharir~\cite{s-atublehd-94} (for $d>3$).
Note that the complexity of the resulting envelope is bounded by $O(N^{d-1+\epsilon})$.

Recall that the coordinates of each vertex $v_i(s)$ of $\SPM(s)$ is
an algebraic function~\cite{cm-tpespqp-99}.
Lemma~\ref{lemma:dist_func_spm} implies the functions $f_i$ are algebraic, too.
Thus, we can apply the above algorithms to compute
the upper envelope $\mathcal{U}_\sigma$ of the graphs of $f_i$, and obtain an explicit expression of $\prad$.

In our case, we have $N = m = O(n)$, since any shortest path map $\SPM(s)$ has $O(n)$ complexity.
Each function $f_i$ has two degrees of freedom (i.e., the coordinates of $s$ within $\sigma$),
so the graph of $f_i$ lies in three-dimensional space.
That is, the upper envelope $\mathcal{U}_\sigma$ of the functions $f_i$ can be computed in $O(n^{2+\epsilon})$
for any positive $\epsilon$.
Once the envelope is computed, we can find the points with the lowest $z$-coordinate in $\mathcal{U}_\sigma$
in the same time bound by traversing all faces of the envelope $\mathcal{U}_\sigma$.
Any point that minimizes $\mathcal{U}_\sigma$ is a candidate for a geodesic center,
and its image will be its corresponding radius.

Consequently, we spend $O(n^{2+\epsilon})$ time per cell $\sigma$ of $\DecompSPM$.
Since $\DecompSPM$ consists of $O(n^{10})$ cells, we obtain the claimed time bound $O(n^{12+\epsilon})$.
\end{proof}

%%%%%%%%%%%%%%%%%%%%%%%%%%%%%%%%%%%%%%%%%%%%%%%%%%%%
\section{Concluding Remarks} \label{sec:conclusion}
%%%%%%%%%%%%%%%%%%%%%%%%%%%%%%%%%%%%%%%%%%%%%%%%%%%
%In this paper we have exploited the $\DecompSPM$ decomposition.
We have presented the first algorithm that computes
the geodesic radius and center of a general polygonal domain with holes.
The running time of our algorithm is large, but still comparable with those for other related problems.
A bottleneck of our algorithm is to compute the SPM-equivalence decomposition $\DecompSPM$,
which is very complicated and not well understood.
The best known upper bound on the complexity of $\DecompSPM$ is $O(n^{10})$,
and it is known how to construct a polygonal domain whose decomposition $\DecompSPM$
has $\Omega(n^4)$ complexity~\cite{cm-tpespqp-99}.
Thus, a better analysis on the upper bound for $\DecompSPM$
would directly lead to an improvement to our algorithm.

Another approach for improvement would be to use a coarser subdivision,
such as the \emph{SPT-equivalence decomposition}~\cite{cm-tpespqp-99}.
The SPT-equivalence decomposition $\DecompSPT$ only requires shortest path trees $\SPT(s)$
for all $s$ in each cell of $\DecompSPT$ to be isomorphic, rather than equivalence between $\SPM(s)$.
The complexity of this subdivision is $O(n^4)$, which is much smaller than that of $\DecompSPM$,
and has similar (albeit slightly weaker) properties to those of $\DecompSPM$.
Ideally, we would want an algorithm that can compute the $\sigma$-constrained geodesic radius
for a cell $\sigma$ of $\DecompSPT$ in $o(n^{8+\varepsilon})$ time
so that overall the running time improves Theorem~\ref{theorem:algo_spm}.
However, all of our attempts needed significantly more than $\Omega(n^{8})$ time,
which lead to even slower algorithms.
%One of the main difficulties seem to be that visibility is not preserved within a cell.

Throughout the paper, we have focused on the exact computation of the geodesic radius and centers,
but one could also consider the approximation variant.
By the triangular inequality, any point $s\in \Poly$ and its farthest point $t\in\Poly$
give a $2$-approximation of the radius.
That is, $\rad(\Poly) \leq \prad(s) \leq 2\, \rad(\Poly)$ for any $s\in\Poly$.
Similarly, we can obtain a $(1+\varepsilon)$-approximation by using a standard grid technique:
Scale $\Poly$ so that $\Poly$ fits into a unit square,
and partition $\Poly$ with a grid of size $O(\varepsilon^{-1})\times O(\varepsilon^{-1})$.
Define the set $D$ to be the point set containing the grid points that are inside $\Poly$,
and intersection points between boundary edges and grid edges.
We then observe that the distance between any two points $s$ and $t$ in $\Poly$ is
within a $(1+\varepsilon)$-factor of the distance between two points of $D$ that are closest
from $s$ and $t$, respectively.
Hence, we conclude that $\min_{z \in D} \prad(z) \leq (1+\varepsilon)\cdot \rad(\Poly)$.
Since $D$ consists of $O(\varepsilon^{-1} ( \varepsilon^{-1} + n))$ points
and it takes $O(n\log n)$ time per each point $z\in D$ to find its farthest neighbors
using the shortest path map $\SPM(z)$,
this algorithm runs in $O((\frac{n}{\varepsilon^2}+\frac{n^2}{\varepsilon})\log n)$ time.\footnote{%
A similar approach for approximating the geodesic diameter was mentioned in~\cite{bko-gdpd-13}.}
No subquadratic-time approximation algorithm with factor less than $2$
is known so far.
%It remains open to improve either of the algorithms or find an intermediate result
%(say, a $1.5$-approximation algorithm that runs in subquadratic time).
%
%\section*{Acknowledgements}
%Work by S.W. Bae was supported by Basic Science Research Program through the National Research Foundation
%of Korea (NRF) funded by the Ministry of Science, ICT \& Future Planning (2013R1A1A1A05006927).
%Work by Y. Okamoto was partially supported by Grant-in-Aid for Scientific Research from Ministry of Education,
%Science and Culture, Japan and Japan Society for the Promotion of Science,
%and the ELC project (Grant-in-Aid for Scientific Research on Innovative Areas, MEXT Japan)

% --- Bibliography ---
% ------------------------------------------------------------------------
{ %\small
\bibliographystyle{abbrv}
\bibliography{pd}
}

\end{document}